\renewcommand\thesection{\Roman{section}} 
\renewcommand\thesubsection{\Roman{subsection}} 
\titleformat{\section}[block]{\large\scshape\centering}{\thesection.}{1em}{} 
\titleformat{\subsection}[block]{\large}{\thesubsection.}{1em}{} 
\title{\vspace{-15mm}\fontsize{20pt}{10pt}\selectfont\textbf{Non-Gaussian analytic option pricing: a closed formula for the L\'evy-stable model}} 
\author{
{\normalsize Jean-Philippe Aguilar$^{a,\dagger}$, Cyril Coste$^{b,\dagger\dagger}$, Jan Korbel$^{c,\dagger\dagger\dagger}$ }\\
{\small \it $\dagger$ BRED Banque Populaire, Modeling Department, 18 quai de la Râpée, Paris - 75012}\\
{\small \it $\dagger$ $\dagger$ MAIF, 200 avenue Salvador Allende, Niort}\\
{\small \it $\dagger$ $\dagger$ $\dagger$ Section for the Science of Complex Systems, CeMSIIS, Medical
University of Vienna}\\
{\small \it Spitalgasse 23, A-1090, Vienna, Austria}\\
{\small \it Complexity Science Hub Vienna, Josefst\"{a}dterstrasse 39, 1080 Vienna, Austria}\\
{\small \it Faculty of Nuclear Sciences and Physical Engineering, Czech Technical University in Prague}\\
{\small \it B\v{r}ehov\'{a} 7, 115 19, Prague, Czech Republic}\\
{\normalsize \textit{(a)} jean-philippe.aguilar@bred.fr, \textit{(b)} cyril.coste@maif.fr,}\\
{\normalsize (c) korbeja2@fjfi.cvut.cz}
}
\date{}
\providecommand{\keywords}[1]{\textbf{\textit{Key words---}} #1}
\newcommand{\res}{\mathrm{Res}}
\newcommand{\id}{\mathrm{d}}
\newcommand{\ud}{\mathrm{d}}
\newtheorem{lemma}{Lemma}
\newtheorem{proposition}{Proposition}
\newtheorem{theorem}{Theorem}
\begin{document}
\maketitle 
\pagestyle{headings}
\setcounter{page}{1}
\pagenumbering{arabic}
\begin{abstract}
\noindent We establish an explicit pricing formula for the class of L\'evy-stable models with maximal negative asymmetry (Log-L\'evy model with finite moments and stability parameter $1<\alpha\leq 2$) in the form of rapidly converging series. The series is obtained with help of Mellin transform and the residue theory in $\mathbb{C}^2$. The resulting formula enables the straightforward evaluation of an European option with arbitrary accuracy without the use of numerical techniques. The formula can be used by any practitioner, even if not familiar with the underlying mathematical techniques. We test the efficiency of the formula, and compare it with numerical methods.
\end{abstract}

\keywords{European option, $\alpha$-stable distribution, L\'evy process, Mellin transform, Multidimensional complex analysis}

\thispagestyle{fancy} 

\section{Introduction}

Black-Scholes (BS) model \cite{BS73} is the well established option pricing model which opened the way to the development of theory derivative pricing, and remains a fundamental tool in the construction of hedging policies. It is a Gaussian model --  the underlying price dynamics is described by a geometric Brownian motion. Nice feature of the BS model is that the solution is analytically solvable, i.e., the price of an European option under this model can be written out explicitly via a formula involving only elementary functions.
\newline
\noindent Black-Scholes theory, however, does not describe well extreme events, such as dramatic price drops, which appear  more often than predicted by Gaussian models \cite{Taleb10,Kleinert13}. As a result, many generalizations of BS models that incorporate extreme events have been introduced.
Particularly interesting are models based on $\alpha$-stable (or also called L\'{e}vy) distributions. These distributions possess polynomial decays of the tails which makes large jumps much more probable. The first signs of heavy tails in financial modeling have been discussed in seminal works of Mandelbrot and Fama in 1960s \cite{Mandelbrot63,Fama65}. The L\'{e}vy distributions decay as $1/|x|^{1+\alpha}$, where $\alpha\in  [0,2]$ is called \textit{stability parameter}. For financial applications, it is typical to assume $\alpha\in ]1,2]$, because the sample paths are a.s. continuous; for most financial applications, $\alpha \approx 1.5-1.7$ -- see \cite{Bouchaud97} and references therein. When $\alpha=2$, we recover the ordinary Gaussian distribution.
\newline
However, in general, price (or log-price) $S_t$ driven by a L\'evy distribution does not possess finite moments $E[S_t^n]$, resulting in potentially infinite option values (see \cite{Wu03} and references therein). To cope with this divergence, Carr and Wu introduced, 
a model featuring a L\'evy-driven underlying price \cite{Wu03}, with a so-called condition of \textit{maximal negative asymmetry} (or \textit{skewness}) on the Levy process that ensures the option price to remain finite. Following Carr and Wu, we will refer to this model as the \textit{FMLS} (Finite Moment Log Stable) or simply the \textit{L\'evy-stable} model.
\newline
Existing option pricing methods under L\'evy stable model include Monte-Carlo related simulations \cite{Tankov06}, which converge with a very good degree of precision (although structurally possessing a statistical error), and numerical evaluation of integrals (Mellin-Barnes integrals in \cite{Kleinert16}, integrals on the L\'evy density in \cite{Rachev99}). However, unlike to Black-Scholes model, no simple closed formula existed to evaluate the option price directly. The purpose of this paper is to establish such a formula for the L\'evy-stable model; it will be achieved under the form of the rapidly convergent series 
, whose terms are very simple and straightforward to compute. The calculation of the series is based on techniques known from complex analysis, particularly Mellin-Barnes integral representation and residue theory in $\mathbb{C}^2$.
\newline
\noindent The paper is organized as follows: in the next section we briefly recall some facts and notations about L\'evy-stable option pricing. In section 3, we establish a Mellin-Barnes integral representation for the call price in $\mathbb{C}^2$. In section 4, we compute this integral by means of residue summation, and prove the pricing formula \eqref{Formula}. In section 5, we demonstrate the efficiency of this formula and we check it by comparing to results obtained via numerical tools. For the readers' convenience, we also add an appendix with finer details of Mellin transform in $\mathbb{C}$ and $\mathbb{C}^2$, and some tools from fractional analysis that turn out to be useful for stable option pricing.

\section{L\'{e}vy stable option pricing}

\noindent In this section, we briefly recall the definition and main properties of $\alpha$-stable distributions and corresponding L\'{e}vy process. We focus on the application to financial modeling, particularly to Finite moment log-stable processes. 

\noindent The log-stable, (or log-L\'evy model), is a non-Gaussian model into which the underlying price is assumed to be described by the stochastic differential equation
\begin{equation}\label{Levy_SDE}
\ud S_t \, = \, r S_t \ud t \, +  \, \sigma \,  S_t \ud L_{\alpha,\beta}(t)
\end{equation}
where $L_{\alpha,\beta}(t)$ is the L\'evy process. The probability distribution of the L\'{e}vy process is the $\alpha$-stable distribution $g_{\alpha,\beta}(x,t) = \frac{1}{t^{1/\alpha}} g\left(\frac{x}{t^{1/\alpha}}\right)$ which is typically defined through the Fourier transform as

\begin{equation}
\int\limits_{-\infty}^{\infty} e^{- i k x} g_{\alpha,\beta}(x) \ud x \, = \, \exp(|k|^\alpha (1-i\beta \mathrm{sign}(k) \omega(k,\alpha))
\end{equation}
where $\omega(k,\alpha)= \tan \frac{\pi \alpha}{2}$ for $\alpha \neq 1$ and $\omega(k,1) = 2/\pi \ln |k|$.
 Parameter $\alpha \in [0,2]$ is the \textit{stability parameter}  which controls decay of the tails, and \textit{asymmetry parameter} $\beta \in [-1,1]$ that influences the asymmetry of the distribution. For $\beta=0$ we obtain symmetric distribution, for $\beta= \pm 1$ we have the distribution with maximal asymmetry. Sometimes, it is convenient to introduce the four-parameter class of stable distributions involving also location and scale parameters. This can be done straightforwardly $g_{\alpha,\beta;\mu,\sigma}(x) = g_{\alpha,\beta}\left(\frac{x-\mu}{\sigma}\right)$.

\noindent Stable distributions have heavy-tails, in other words they decay polynomially as $\frac{1}{|x|^\alpha}$ when $|x|\rightarrow\infty$, except in two cases \cite{Zolotarev86}:
\begin{itemize}
\item[$-$] $\alpha=2$, $g_{\alpha,\beta}$ reduces to the Gaussian distribution (regardless of $\beta$) and the L\'{e}vy process becomes ordinary Brownian diffusion;
\item[$-$] $\beta=\pm 1$, $g_{\alpha,\beta}$ possesses one heavy-tail (positive or negative, depending on the sign of $\beta$) and another tail with exponential decay for $\alpha > 1$. For $\alpha \leq 1$ the support of the distribution is even confined to the positive/negative half-axis.
\end{itemize}

\noindent In the case $\beta=-1$, the process $S_t$ defined by Eq. \eqref{Levy_SDE} has all moments finite. The finiteness of the moments is the result of the fact that the (two-sided) Laplace transform of $g_{\alpha,\beta}$ exists only for $\beta=-1$ and is equal to
\begin{equation}\label{laplace}
\int\limits_{-\infty}^\infty e^{-\lambda x} g_{\alpha,-1}(x) \ud x \, = \, \exp\left(- \frac{\lambda^\alpha}{\cos \frac{\pi \alpha}{2}} \right)
\end{equation}
The process with $\beta=-1$ has been used in connection to option pricing in \cite{Wu03} and is called Finite Moment Log Stable (FMLS) process. In this case, the solution for the SDE \eqref{Levy_SDE} is:
\begin{equation}
S_t \, = \, S_0 e^{(r + \mu)t} e^{\sigma L_{\alpha,-1}(t)}\, 
\end{equation}
The parameter $\mu$ comes from the Esscher transform $X \rightarrow \frac{e^{X}}{E[e^X]}$. The normalization $E[e^X]$ is connected to the existence of an equivalent risk-neutral measure, which has the martingale property \cite{Gerber93}. As a result, $\mu$ can be calculated as
\begin{equation}
\mu \, = \, \ln \int\limits_{-\infty} ^{+\infty} \, e^{x} \, g_{\alpha,\theta}(x) \, \id x\, . 
\end{equation}
The integral converges only for FMLS process, i.e. in the case only of maximal negative asymmetry case $\beta=-1$ \cite{Wu03} (compare with Eq. \eqref{laplace}) and reads
\begin{equation}
\mu \, = \, \frac{\left(\frac{\sigma}{\sqrt 2}\right)^\alpha}{\cos\frac{\pi\alpha}{2}}
\end{equation}
where we have introduced the $\sqrt{2}$ normalization, so that $\sigma$ becomes distribution variance of Gaussian distribution (Black-Scholes model -- see Appendix B for more details). 

\noindent
The price of an European call option can be generally expressed as
\begin{equation}\label{expect}
V(S_t,\tau) = e^{-r \tau } E[V(S_T,T)|S_t,t]
\end{equation} 
where $\tau = T-t$, $T$ is the maturity time. The boundary condition is for European call option defined as $V(S_T,T) = \max \{S_T-K,0\}:=[S_T-K]^+$, where $K$ is the strike price of the call option. For the FMLS model, the call price \eqref{expect} can be expressed as the convolution of a modified payoff and a Green function \cite{Kleinert16}:

\begin{equation}\label{Levy_Green}
V_{\alpha} (S,K,r,\mu,\tau) \, = \, \frac{e^{-r\tau}}{(-\mu \tau)^{\frac{1}{\alpha}}} \int\limits_{-\infty}^{+\infty}  [Se^{(r+\mu)\tau +y}-K]^+ g_{\alpha} \left(\frac{y}{(-\mu \tau)^{\frac{1}{\alpha}}}\right) \, \id y\, 
\end{equation}

\noindent 
where the Green function $g_{\alpha}$ can be expressed under the form of a Mellin-Barnes integral, that is, an integral over a vertical line in $\mathbb{C}$ (see Eq. \eqref{Green_FLMS_R+} in Appendix, where the reader will find more details and references about Mellin-Barnes integrals and their link to fractional diffusions): for any $X>0$,
\begin{equation}\label{Mellin_Green_LS}
g_{\alpha}(X) \, = \, \frac{1}{\alpha} \, \int\limits_{c_1-i\infty }^{c_1+ i \infty} \,
\frac{\Gamma(1-t_1)}{\Gamma(1-\frac{t_1}{\alpha})}
\, X^{t_1-1} \, \frac{\id t_1}{2i\pi}  \hspace*{1cm}  0 < c_1 < \min\{\alpha,1\}
\end{equation}

\noindent Let us mention that the asymmetry of the L\'{e}vy distribution $\beta$ can be equivalently described by the parameter $\theta$ that is confined to the so-called  \textit{Feller-Takayasu diamond} \cite{Gorenflo12}
\begin{equation}
\theta \leq \min\{ \alpha, 2-\alpha\}
\end{equation}
Parameter $\theta$ is uniquely determined by $\alpha$ and $\beta$. For example, for $\alpha > 1$, the value $\theta = -\alpha + 2$ corresponds to the case
when $\beta = 1$  and conversely, $\theta = \alpha - 2$ corresponds to $\beta = -1$. In all cases $\theta=0$ corresponds to $\beta=0$.  For more details about stable distributions, see e.g., Refs. \cite{Zolotarev86,Sato99,Kleinert09}.

\section{Mellin-Barnes representation for the stable call price}

\noindent We now derive an expression for the price \eqref{Levy_Green} under the form on an integral of some complex differential 2-form. Let then $S$, $K$, $\sigma$, $\tau$ $>0$; we will also assume that $\alpha>1$ (L\'evy-Pareto case) and that $\theta = \alpha - 2$ (Carr-Wu maximal negative asymmetry hypothesis).Let us define $[\log]:=\log\frac{S}{K} \, + \, r\tau$. We can rewrite the modified payoff as 
\begin{equation}
[Se^{(r+\mu)\tau +y}-K]^+ \, = \, K[e^{ [\log] + \mu\tau +y}-1]^+
\end{equation}
so that the option price can be conveniently rewritten as
\begin{equation}\label{option}
V_{\alpha} (S,K,r,\mu,\tau) \, = \, K\frac{e^{-r\tau}}{(-\mu \tau)^{\frac{1}{\alpha}}} \int\limits_{-\infty}^{+\infty}  [e^{[\log]+\mu\tau +y}-1]^+ g_{\alpha} \left(\frac{y}{(-\mu \tau)^{\frac{1}{\alpha}}}\right) \, \id y
\end{equation}
Let us now show how to rewrite the option price \eqref{option} 
with help of Mellin-Barnes integral representations.
\begin{lemma}
Assume that $[\log]+\mu\tau<0$; then for any $c_1<1$ the following holds:
\begin{equation}
V_{\alpha} (S,K,r,\mu,\tau) \, = \, \, K\frac{e^{-r\tau}}{\alpha} \, \int\limits_{c_1-i\infty}^{c_1+i\infty} \, \frac{\Gamma(1-t_1)}{\Gamma(1-\frac{t_1}{\alpha})}   \,  \int\limits_{-[\log]-\mu\tau}^{\infty} \, (e^{[\log]+\mu\tau +y}-1) \, y^{t_1-1} \id y \, (-\mu\tau)^{-\frac{t_1}{\alpha}} \, \frac{\id t_1}{2i\pi}
\end{equation}
\end{lemma}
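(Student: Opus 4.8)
The plan is to substitute the Mellin-Barnes representation \eqref{Mellin_Green_LS} of the Green function $g_\alpha$ into the convolution \eqref{option} and then exchange the order of the two integrations. First I would remove the positive part. By hypothesis $[\log]+\mu\tau<0$, so the modified payoff $[e^{[\log]+\mu\tau+y}-1]^+$ vanishes for $y\le -[\log]-\mu\tau$ and equals $e^{[\log]+\mu\tau+y}-1$ for $y>-[\log]-\mu\tau$; the integral over $\mathbb{R}$ in \eqref{option} therefore collapses to an integral over $(-[\log]-\mu\tau,\infty)$ with the bracket dropped. The crucial consequence of the hypothesis is that the lower endpoint $-[\log]-\mu\tau$ is strictly positive, so the whole integration range lies in $y>0$, precisely where the representation \eqref{Mellin_Green_LS} of $g_\alpha\big(y/(-\mu\tau)^{1/\alpha}\big)$ applies (note $-\mu\tau>0$, since $\cos\frac{\pi\alpha}{2}<0$ for $1<\alpha\le 2$, so $(-\mu\tau)^{1/\alpha}$ is a genuine positive real).

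Next I would insert \eqref{Mellin_Green_LS} with $X=y/(-\mu\tau)^{1/\alpha}$, which yields the factor $\big(y/(-\mu\tau)^{1/\alpha}\big)^{t_1-1}=(-\mu\tau)^{-(t_1-1)/\alpha}\,y^{t_1-1}$, and interchange the $y$-integration with the contour integration along $\mathrm{Re}\,t_1=c_1$. Collecting the scale factors gives $(-\mu\tau)^{-1/\alpha}\cdot(-\mu\tau)^{-(t_1-1)/\alpha}=(-\mu\tau)^{-t_1/\alpha}$, while the $\tfrac{1}{\alpha}$ of \eqref{Mellin_Green_LS} merges with $Ke^{-r\tau}$ into the announced prefactor $Ke^{-r\tau}/\alpha$, leaving exactly the inner integral $\int_{-[\log]-\mu\tau}^{\infty}(e^{[\log]+\mu\tau+y}-1)\,y^{t_1-1}\,\id y$. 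The representation \eqref{Mellin_Green_LS} is stated only for $0<c_1<\min\{\alpha,1\}=1$, whereas the Lemma allows any $c_1<1$; I would obtain this wider range by a contour shift, observing that the only singularities of $t_1\mapsto\Gamma(1-t_1)/\Gamma(1-t_1/\alpha)$ are the poles of $\Gamma(1-t_1)$ at $t_1=1,2,\dots$, all in $\mathrm{Re}\,t_1\ge 1$, so lowering $c_1$ below $0$ crosses no pole and the horizontal connecting segments vanish thanks to the exponential decay of the Gamma ratio in the imaginary direction.

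The delicate point, which I expect to be the main obstacle, is the legitimacy of this interchange, since the inner $y$-integral is not absolutely convergent for a fixed $t_1$ on the contour: as $y\to\infty$ the modified payoff grows like $e^{y}$ while a single Mellin slice contributes only the polynomial weight $y^{t_1-1}$, so $\int^{\infty}(e^{[\log]+\mu\tau+y}-1)\,y^{t_1-1}\,\id y$ diverges and a naive Fubini argument is unavailable. What actually renders the original convolution \eqref{option} finite is the faster-than-exponential decay of the light (right) tail of $g_\alpha$ for $\beta=-1$, $\alpha>1$ --- the very property responsible for the finiteness of all moments in the FMLS model --- which dominates the exponential growth of the payoff. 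Accordingly I would justify the identity not by absolute convergence but through the Mellin-Parseval/convolution theorem recalled in the Appendix, reading the (formally divergent) inner integral as the analytically continued Mellin symbol of the modified payoff. Its divergence is only virtual: it is resolved at the next step, where the exponential is itself represented by a Mellin-Barnes integral in a second complex variable $t_2$, so that the $y$-integral becomes a convergent Eulerian (Beta-type) integral and the whole expression turns into the genuinely convergent two-dimensional Mellin-Barnes representation underlying the rest of Section III.
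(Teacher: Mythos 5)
Your proposal is correct and follows essentially the same route as the paper, whose entire proof is the single sentence ``we have just inserted \eqref{Mellin_Green_LS} into \eqref{option}'' together with the remark that $\min\{\alpha,1\}=1$ in the L\'evy--Pareto case. Your additional observations --- that the hypothesis $[\log]+\mu\tau<0$ both removes the positive part and places the integration range in $y>0$ where \eqref{Mellin_Green_LS} applies, and that the inner $y$-integral is only formally defined until the splitting into $V^{(1)}$ and $V^{(2)}$ and the second Mellin--Barnes variable render everything convergent --- are accurate and supply justification that the paper omits.
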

\begin{proof}
We have just inserted \eqref{Mellin_Green_LS} into \eqref{option}. Note that, as we are in the L\'evy-Pareto case, $\min\{\alpha,1\}=1$.
\end{proof}
Now, introduce
\begin{equation}\label{Valpha_definition}
\left\{
\begin{aligned}
& V_{\alpha}^{(1)} (S,K,r,\mu,\tau) \, := \, K\frac{e^{-r\tau}}{\alpha} \, \int\limits_{c_1-i\infty}^{c_1+i\infty} \, \frac{\Gamma(1-t_1)}{\Gamma(1-\frac{t_1}{\alpha})} \int\limits_{-[\log]-\mu\tau}^{\infty} \, e^{[\log]+\mu\tau +y} \, y^{t_1-1} \id y \, (-\mu\tau)^{-\frac{t_1}{\alpha}} \, \frac{\id t_1}{2i\pi} \\
& V_{\alpha}^{(2)} (S,K,r,\mu,\tau) \, := \, K\frac{e^{-r\tau}}{\alpha} \, \int\limits_{c_1-i\infty}^{c_1+i\infty} \, \frac{\Gamma(1-t_1)}{\Gamma(1-\frac{t_1}{\alpha})} \int\limits_{-[\log]-\mu\tau}^{\infty} \,  y^{t_1-1} \id y \, (-\mu\tau)^{-\frac{t_1}{\alpha}} \, \frac{\id t_1}{2i\pi}
\end{aligned}
\right.
\end{equation}
so that we can write:
\begin{equation}\label{v12}
V_{\alpha} (S,K,r,\mu,\tau) \, = \, V_{\alpha}^{(1)} (S,K,r,\mu,\tau) \, - \, V_{\alpha}^{(2)} (S,K,r,\mu,\tau)
\end{equation}

\begin{proposition}
The following Mellin-Barnes representations in $\mathbb{C}^2$ and $\mathbb{C}$ hold:
\newline
\noindent (i) \underline{$V^{(1)}$-integral:}
\begin{equation}\label{Valpha1}
\begin{aligned}
& V_{\alpha}^{(1)} (S,K,r,\mu,\tau) \, = \, \\
& K\frac{e^{-r\tau}}{\alpha} \, \int\limits_{c_1-i\infty}^{c_1+i\infty}\int\limits_{c_2-i\infty}^{c_2+i\infty} \, (-1)^{-t_2} \, \frac{\Gamma(t_2)\Gamma(1-t_2)\Gamma(-t_1+t_2)}{\Gamma(1-\frac{t_1}{\alpha})} \left(-[\log]-\mu\tau \right)^{t_1-t_2} (-\mu\tau)^{-\frac{t_1}{\alpha}} \, \frac{\id t_1}{2i\pi} \wedge \frac{\id t_2}{2i\pi}
\end{aligned}
\end{equation}
and this double integral converges into the polyhedra $\left\{ Re(-t_1+t_2) > 0 , \, 0 < Re(t_2) < 1 \right\}$
\newline
\noindent (ii) \underline{$V^{(2)}$-integral:}
\begin{equation}\label{Valpha2}
V_{\alpha}^{(2)} (S,K,r,\mu,\tau) \, = \, K\frac{e^{-r\tau}}{\alpha} \, \int\limits_{c_1-i\infty}^{c_1+i\infty}  \, \frac{\Gamma(-t_1)}{\Gamma(1-\frac{t_1}{\alpha})} \left(-[\log]-\mu\tau \right)^{t_1} (-\mu\tau)^{-\frac{t_1}{\alpha}} \, \frac{\id t_1}{2i\pi}
\end{equation}
and this integral converges into the half-plane $\{Re(t_1)<0\}$
\end{proposition}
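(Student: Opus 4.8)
The plan is to treat the two integrals separately, starting in each case from the representation supplied by the Lemma, i.e.\ with the Green function already replaced by its single Mellin--Barnes integral \eqref{Mellin_Green_LS}. Throughout I write $a := -[\log]-\mu\tau$, which is strictly positive by the hypothesis $[\log]+\mu\tau<0$, so that the inner integration runs over $y\in(a,\infty)$.

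For part (ii) the inner integral is elementary. Since $\int_a^\infty y^{t_1-1}\,\id y$ converges precisely when $\mathrm{Re}(t_1)<0$, and then equals $-a^{t_1}/t_1$, I would substitute this value and use the functional identity $\Gamma(1-t_1)=(-t_1)\Gamma(-t_1)$ to absorb the factor $-1/t_1$ into the Gamma function. This turns $\Gamma(1-t_1)/(-t_1)$ into $\Gamma(-t_1)$ and yields \eqref{Valpha2} directly, the fundamental strip being exactly the half-plane $\{\mathrm{Re}(t_1)<0\}$ dictated by convergence of the $y$-integral.

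Part (i) is the substantive one, because the inner integral $\int_a^\infty e^{[\log]+\mu\tau+y}y^{t_1-1}\,\id y$ diverges as written: the integrand grows like $e^{y}$. The remedy is to introduce the second complex variable by expanding the exponential through the Cahen--Mellin integral. Writing $[\log]+\mu\tau+y = -(a-y)$ and applying $e^{-x}=\frac{1}{2i\pi}\int_{c_2-i\infty}^{c_2+i\infty}\Gamma(t_2)x^{-t_2}\,\id t_2$ with $x=a-y$, I note that on the domain $y>a$ the argument $a-y$ is negative, so $x^{-t_2}=(-1)^{-t_2}(y-a)^{-t_2}$; this is precisely where the phase $(-1)^{-t_2}$ of \eqref{Valpha1} is born. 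After interchanging the $t_2$- and $y$-integrations (justified by absolute convergence in the strips fixed below), the $y$-integral becomes $\int_a^\infty (y-a)^{-t_2}y^{t_1-1}\,\id y$, which the substitution $y=a/u$ turns into a Beta integral equal to $a^{t_1-t_2}B(t_2-t_1,1-t_2)=a^{t_1-t_2}\,\Gamma(t_2-t_1)\Gamma(1-t_2)/\Gamma(1-t_1)$.

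The conclusion then follows by a pleasant cancellation: the factor $\Gamma(1-t_1)$ coming from the Green function in the Lemma is exactly cancelled by the $\Gamma(1-t_1)$ in the denominator produced by the Beta function, leaving the kernel $(-1)^{-t_2}\Gamma(t_2)\Gamma(1-t_2)\Gamma(-t_1+t_2)/\Gamma(1-\frac{t_1}{\alpha})$ multiplied by $a^{t_1-t_2}(-\mu\tau)^{-\frac{t_1}{\alpha}}$, which is \eqref{Valpha1}. The convergence polyhedron is assembled from three requirements: $\Gamma(t_2)$ in the Cahen--Mellin step needs $\mathrm{Re}(t_2)>0$, while the Beta integral converges only for $\mathrm{Re}(t_2-t_1)>0$ and $\mathrm{Re}(1-t_2)>0$; together these give $\{\mathrm{Re}(-t_1+t_2)>0,\;0<\mathrm{Re}(t_2)<1\}$. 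I expect the main obstacle to be rigorously licensing the Cahen--Mellin expansion for the negative real argument $a-y$ (fixing the branch of $(-1)^{-t_2}$) together with the Fubini interchange; once the contours $c_1,c_2$ are chosen inside the stated strips so that every Gamma factor is evaluated away from its poles, absolute convergence makes both steps legitimate.
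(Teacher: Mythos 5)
Your proposal is correct and follows essentially the same route as the paper: the elementary power integral combined with $\Gamma(1-t_1)=-t_1\Gamma(-t_1)$ for $V^{(2)}$, and for $V^{(1)}$ the Cahen--Mellin expansion of the exponential (producing the $(-1)^{-t_2}$ phase), reduction of the $y$-integral to a Beta function $\Gamma(t_2-t_1)\Gamma(1-t_2)/\Gamma(1-t_1)$, and cancellation of $\Gamma(1-t_1)$, with the same convergence constraints $\mathrm{Re}(t_2)>0$, $\mathrm{Re}(t_2)<1$, $\mathrm{Re}(-t_1+t_2)>0$. Your explicit flagging of the branch issue in applying Cahen--Mellin to a negative argument is a point the paper glosses over, but it does not change the argument.
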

\begin{proof}
We start by (ii): integrating over the Green parameter $y$ in the $V^{(2)}$-integral in the definition \eqref{Valpha_definition} results in:
\begin{equation}
\int\limits_{-[\log]-\mu\tau}^{\infty} \, e^{[\log]+\mu\tau +y} \, y^{t_1-1} \id y \, = \, -\frac{(-[\log]-\mu\tau)^{t_1}}{t_1} \, \hspace*{2cm} [Re(t_1)<0]
\end{equation}
Then we use the Gamma functional relation:
\begin{equation}
\frac{\Gamma(1-t_1)}{t_1}=-\Gamma(-t_1)
\end{equation}
and this yields the desired result \eqref{Valpha2}.
\newline
\noindent To prove (ii), we start by introducing a supplementary Mellin-Barnes representation for the exponential term (see eq. \eqref{Cahen} in Appendix):
\begin{equation}
e^{[\log]+\mu\tau +y} \, = \, \int\limits_{c_2-i\infty}^{c_2+i\infty} \, (-1)^{-t_2} \, \Gamma(t_2) \, ([\log]+\mu\tau +y)^{-t_2} \, \frac{\id t_2}{2i\pi}
\end{equation}
where the integral converges for $Re(t_2)>0$, or, equivalently said, $c_2<0$. Plugging this into the $V^{(1)}$-integral in the definition \eqref{Valpha_definition} gives birth to a Beta integral \cite{Abramowitz72}:
\begin{equation}
\int\limits_{-[\log]-\mu\tau}^{\infty} ([\log] + \mu\tau + y)^{-t_2} \ y^{t_1-1} \, \id y  \, = \, \frac{\Gamma(1-t_2)\Gamma(-t_1+t_2)}{\Gamma(1-t_1)} \, (-[\log]-\mu\tau)^{t_1-t_2}  \hspace*{0.5cm} [ Re(t_2) < 1 \, , \, Re(-t_1+t_2)>0]
\end{equation}
Simplifying by $\Gamma(1-t_1)$ in the incoming double integral yields the representation \eqref{Valpha2}
\end{proof}

\section{Residue summation and closed pricing formula}

\subsection{$V^{(2)}$ integral}
\begin{proposition}
The $V_\alpha^{(2)}(S,K,r,\mu,\tau)$ integral can be expressed as the sum of the following absolutely convergent series:
\begin{equation}\label{Valpha2series}
V_\alpha^{(2)}(S,K,r,\mu,\tau) \, = \, \frac{Ke^{-r\tau}}{\alpha} \, \sum\limits_{n=0}^{\infty} \, \frac{(-1)^n}{n!\Gamma(1-\frac{n}{\alpha})} (-[\log]-\mu\tau)^{n}(-\mu\tau)^{-\frac{n}{\alpha}}
\end{equation}
\end{proposition}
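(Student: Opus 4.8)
The plan is to evaluate the single Mellin-Barnes integral \eqref{Valpha2} by the residue theorem, closing the vertical contour $\{Re(t_1)=c_1\}$ (with $c_1<0$, hence inside the convergence half-plane $\{Re(t_1)<0\}$) to the right. First I would locate the singularities of the integrand in the right half-plane. The reciprocal Gamma factor $1/\Gamma(1-\frac{t_1}{\alpha})$ is entire, and the two power terms $(-[\log]-\mu\tau)^{t_1}$ and $(-\mu\tau)^{-\frac{t_1}{\alpha}}$ are entire in $t_1$ (their bases being positive, since $[\log]+\mu\tau<0$ and $-\mu\tau>0$). Therefore the only poles lying to the right of the contour are the simple poles of $\Gamma(-t_1)$, located at $t_1=n$ for $n=0,1,2,\ldots$, and the summation over residues at these points will generate the index $n$ of the series \eqref{Valpha2series}.

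The next step is the residue bookkeeping. Using $\res_{t_1=n}\Gamma(-t_1)=\frac{(-1)^{n+1}}{n!}$, the residue of the full integrand at $t_1=n$ is
\[
\frac{(-1)^{n+1}}{n!}\,\frac{1}{\Gamma(1-\frac{n}{\alpha})}\,(-[\log]-\mu\tau)^{n}\,(-\mu\tau)^{-\frac{n}{\alpha}}.
\]
Since closing to the right encircles the poles in the clockwise (negative) sense, the prefactor $\frac{1}{2i\pi}$ converts the closed contour into $-\sum_{n}\res_{t_1=n}$; the compensating sign $-(-1)^{n+1}=(-1)^{n}$ then reproduces exactly the alternating series \eqref{Valpha2series}, including the global factor $Ke^{-r\tau}/\alpha$.

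The step I expect to be the main obstacle is justifying that the integral over the large right-hand arcs vanishes, so that the contour may legitimately be closed. For this I would invoke the asymptotics of the Gamma quotient: by the reflection formula one has, away from the poles,
\[
\frac{\Gamma(-t_1)}{\Gamma(1-\frac{t_1}{\alpha})}\;=\;-\,\frac{\sin(\pi t_1/\alpha)}{\sin(\pi t_1)}\,\frac{\Gamma(\frac{t_1}{\alpha})}{\Gamma(1+t_1)},
\]
and Stirling's formula shows that $|\Gamma(\frac{t_1}{\alpha})/\Gamma(1+t_1)|$ decays super-exponentially as $Re(t_1)\to+\infty$ precisely because $\frac{1}{\alpha}<1$. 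This factorial-type decay dominates the at-most-exponential growth of the power terms $(-[\log]-\mu\tau)^{t_1}(-\mu\tau)^{-t_1/\alpha}$, which secures both the vanishing of the arcs and the convergence of the original integral in the imaginary direction (here I would lean on the Mellin-Barnes closure estimates collected in the appendix). The very same bound delivers the \emph{absolute} convergence asserted in the statement: applying Stirling to $1/\Gamma(1-\frac{n}{\alpha})$ shows its growth is governed by $\Gamma(\frac{n}{\alpha})$, so the general term of \eqref{Valpha2series} is controlled by $\Gamma(\frac{n}{\alpha})/n!$ times an exponential factor in $n$; since $\alpha>1$ the ratio $\Gamma(\frac{n}{\alpha})/\Gamma(n+1)\to 0$ faster than any geometric rate. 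This simultaneously legitimates the termwise summation of residues and establishes the claimed rapid (absolute) convergence, the one-dimensional computation serving as a template for the genuinely two-dimensional residue summation required for $V_\alpha^{(1)}$.
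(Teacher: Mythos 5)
Your proposal is correct and follows essentially the same route as the paper: right-closing the contour of the Mellin--Barnes representation \eqref{Valpha2} and summing the residues of $\Gamma(-t_1)$ at $t_1=n$, with identical sign bookkeeping. The only difference is presentational --- where you justify the closure by hand via the reflection formula and Stirling asymptotics, the paper invokes the equivalent characteristic-quantity criterion $\Delta=-1+\frac{1}{\alpha}<0$ from the Mellin--Barnes residue theory recalled in its appendix.
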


\begin{proof}
The characteristic quantity associated to the Mellin-Barnes representation \eqref{Valpha2series}, is, by definition (see \cite{Tsikh94,Tsikh97} or Appendix)
is
\begin{equation}
\Delta \, = \, -1 \, + \, \frac{1}{\alpha} \, < \, 0
\end{equation}
\begin{figure}[t]
\centering
\includegraphics[scale=0.5]{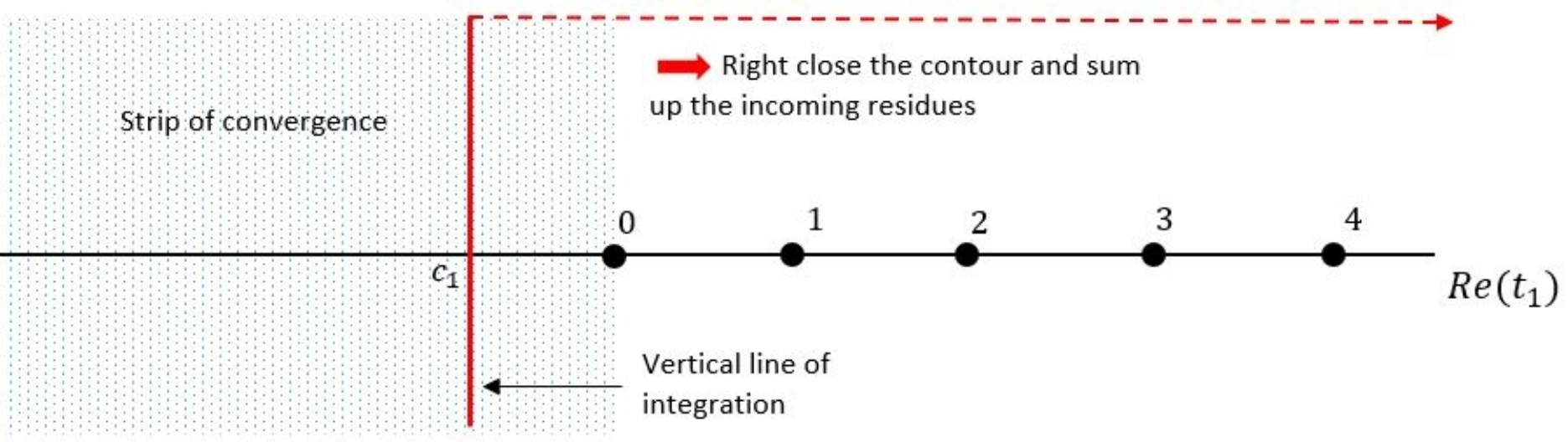}
\caption{Evaluating the $V^{(2)}-integral$ by summing the residues of its analytic continuation in the right-half plane}
\label{fig1}
\end{figure}
\noindent
as soon as $\alpha>1$. Therefore, the integral over the vertical line $(c_1-i\infty, c_1+ i \infty)$ can be right closed, thus corresponding to minus the sum of the residues associated to poles in the right half-plane (see fig. \ref{fig1}). Because of the singular behavior of the Gamma function around its negative arguments \eqref{sing_Gamma} these residues are induced by the $\Gamma(-t_1)$-term and equal
\begin{equation}
\res(t_1 \, = \, +n) \, = \, -\frac{Ke^{-r\tau}}{\alpha} \, \frac{(-1)^n}{n!} \, \frac{1}{\Gamma(1-\frac{n}{\alpha})} \, (-[\log]-\mu\tau)^n \, (-\mu\tau)^{-\frac{n}{\alpha}}
\end{equation}
\end{proof}

\subsection{$V^{(1)}$ integral}
\begin{proposition}
The $V_\alpha^{(1)}(S,K,r,\mu,\tau)$ integral can be expressed as the sum of the following absolutely convergent double series:
\begin{equation}\label{Valpha1series}
V_\alpha^{(1)}(S,K,r,\mu,\tau) \, = \, \frac{Ke^{-r\tau}}{\alpha} \, \sum\limits_{n,m=0}^{\infty} \, \frac{(-1)^n}{n!\Gamma(1-\frac{n-m}{\alpha})} (-[\log]-\mu\tau)^{n}(-\mu\tau)^{\frac{n-n}{\alpha}}
\end{equation}
\end{proposition}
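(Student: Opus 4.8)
The plan is to evaluate the two-dimensional Mellin-Barnes integral \eqref{Valpha1} by the same residue strategy that worked for $V_\alpha^{(2)}$, but now inside the residue calculus in $\mathbb{C}^2$ (Tsikh). First I would read off the polar divisors of the integrand: $\Gamma(t_2)$ produces the horizontal lines $t_2=-m$ ($m\geq 0$), $\Gamma(1-t_2)$ the lines $t_2=1+k$ ($k\geq 0$), and $\Gamma(-t_1+t_2)$ the diagonal lines $-t_1+t_2=-n$, i.e. $t_1=t_2+n$ ($n\geq 0$); the denominator $\Gamma(1-\frac{t_1}{\alpha})$ is pole-free and enters only through the characteristic data.

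Next I would compute the characteristic vector $\Delta\in\mathbb{R}^2$ attached to the representation, formed as the sum of the coefficient vectors of $(t_1,t_2)$ in the numerator Gamma-arguments minus those of the denominator. From the forms $(0,1)$, $(0,-1)$, $(-1,1)$ and $-(-\frac{1}{\alpha},0)$ one gets $\Delta=(\frac{1}{\alpha}-1,\,1)$, whose first component is strictly negative precisely because $\alpha>1$. This is the two-dimensional analogue of the scalar inequality $\Delta=-1+\frac{1}{\alpha}<0$ governing $V_\alpha^{(2)}$, and it fixes the admissible cone into which the contours may be displaced. Combined with $-[\log]-\mu\tau>0$ (guaranteed by the Lemma's hypothesis $[\log]+\mu\tau<0$), it singles out the cone of poles generated jointly by $\Gamma(t_2)$ and $\Gamma(-t_1+t_2)$, while the poles of $\Gamma(1-t_2)$ lie outside the enclosure and contribute nothing.

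I would then sum the residues at the transversal intersections of the two selected families, i.e. over the grid $(t_1,t_2)=(n-m,\,-m)$ with $n,m\geq 0$. Taking $f_1=t_2$ and $f_2=-t_1+t_2$ as the local pole-defining functions, their gradients $(0,1)$ and $(-1,1)$ are independent and the Jacobian determinant $\det\frac{\partial(f_1,f_2)}{\partial(t_1,t_2)}=1$, so the Grothendieck residue factors cleanly into the product of two one-dimensional residues, $\res_{t_2=-m}\Gamma(t_2)=\frac{(-1)^m}{m!}$ and the residue of $\Gamma(-t_1+t_2)$ along $-t_1+t_2=-n$, equal to $\frac{(-1)^n}{n!}$. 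The regular factors are evaluated at the node: $(-1)^{-t_2}=(-1)^m$, $\Gamma(1-t_2)=\Gamma(1+m)=m!$, $(-[\log]-\mu\tau)^{t_1-t_2}=(-[\log]-\mu\tau)^{n}$ and $(-\mu\tau)^{-\frac{t_1}{\alpha}}=(-\mu\tau)^{\frac{m-n}{\alpha}}$, with the denominator $\Gamma(1-\frac{t_1}{\alpha})=\Gamma(1-\frac{n-m}{\alpha})$. The decisive simplification is that the three $m$-dependent pieces $(-1)^{m}$, $\frac{(-1)^m}{m!}$ and $m!$ cancel pairwise, so no $m!$ survives; collecting everything reproduces exactly the summand of \eqref{Valpha1series}.

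The principal obstacle is the multidimensional residue step itself. Whereas in the scalar case ``closing the contour to the right'' is elementary, in $\mathbb{C}^2$ one must invoke Tsikh's criteria on $\Delta$ and the admissible cones to justify that the iterated contour can be deformed so as to enclose exactly the grid $(n-m,-m)$ and no spurious poles, and that the resulting sum of Grothendieck residues genuinely equals the integral. The attendant orientation and sign bookkeeping is the secondary delicate point, fixed so as to match \eqref{Valpha1series}. Once the series is in hand, its absolute convergence follows routinely from Stirling's asymptotics applied to $\frac{1}{\Gamma(1-\frac{n-m}{\alpha})}$ together with the factorial decay $\frac{1}{n!}$ and the fixed bases of the two monomials.
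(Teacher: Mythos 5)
Your proposal is correct and follows essentially the same route as the paper: same characteristic vector $\Delta=(\frac{1}{\alpha}-1,1)$, same compatible cone selecting the divisors of $\Gamma(t_2)$ and $\Gamma(-t_1+t_2)$ while excluding those of $\Gamma(1-t_2)$, and the same residue computation at the grid $(t_1,t_2)=(n-m,-m)$ with the cancellation of the $m$-dependent factors. The only cosmetic difference is that the paper performs the explicit change of variables $u_1=-t_1+t_2$, $u_2=t_2$ before taking residues, whereas you factor the Grothendieck residue via the unit Jacobian directly -- these are equivalent.
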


\begin{proof}
Let us introduce the notations
\begin{equation}
\underline{c} \ :=
\begin{bmatrix}
c_1 \\ c_2
\end{bmatrix}
\hspace*{1cm}
\underline{t} \ :=
\begin{bmatrix}
t_1 \\ t_2
\end{bmatrix}
\hspace*{1cm}
d\underline{t} \, := \, dt_1 \wedge dt_2
\end{equation}
and the complex differential 2-form
\begin{equation}
\omega \, := \, (-1)^{-t_2} \, \frac{\Gamma(t_2)\Gamma(1-t_2)\Gamma(-t_1+t_2)}{\Gamma(1-\frac{t_1}{\alpha})} \left(-[\log]-\mu\tau \right)^{t_1-t_2} (-\mu\tau)^{-\frac{t_1}{\alpha}} \, \frac{\id t_1}{2i\pi} \wedge \frac{\id t_2}{2i\pi}
\end{equation}
so that \eqref{Valpha1} can be written under the standard form of a Mellin-Barnes integral in $\mathbb{C}^2$:
\begin{equation}
V_\alpha^{(1)}(S,K,r,\mu,\tau) \, = \, \frac{Ke^{-r\tau}}{\alpha} \, \int\limits_{\underline{c}+i\mathbb{R}^2} \, \omega
\end{equation}
The characteristic vector associated to $\omega$ is (see \cite{Tsikh94,Tsikh97} and Appendix of this paper) :
\begin{equation}
\Delta \, = \,
\begin{bmatrix}
-1 + \frac{1}{\alpha} \\ 1
\end{bmatrix}
\end{equation}
and therefore the half-plane of convergence one must considers:
\begin{equation}
\Pi_\Delta \, := \, \, \, \, \, \, \{ \underline{t} \in \mathbb{C}^2, \, Re( \Delta \, . \, \underline{t}) \,  \, < \,  \, Re( \Delta \, . \, \underline{c} \, ) \}
\end{equation}
is the one located under the line (see fig. \ref{fig2}):
\begin{equation}
Re(t_2) \, = \, (1-\frac{1}{\alpha}) Re(t_1) \, + \, (-1+\frac{1}{\alpha})c_1 + c_2
\end{equation}
\begin{figure}[t]
\centering
\includegraphics[scale=0.5]{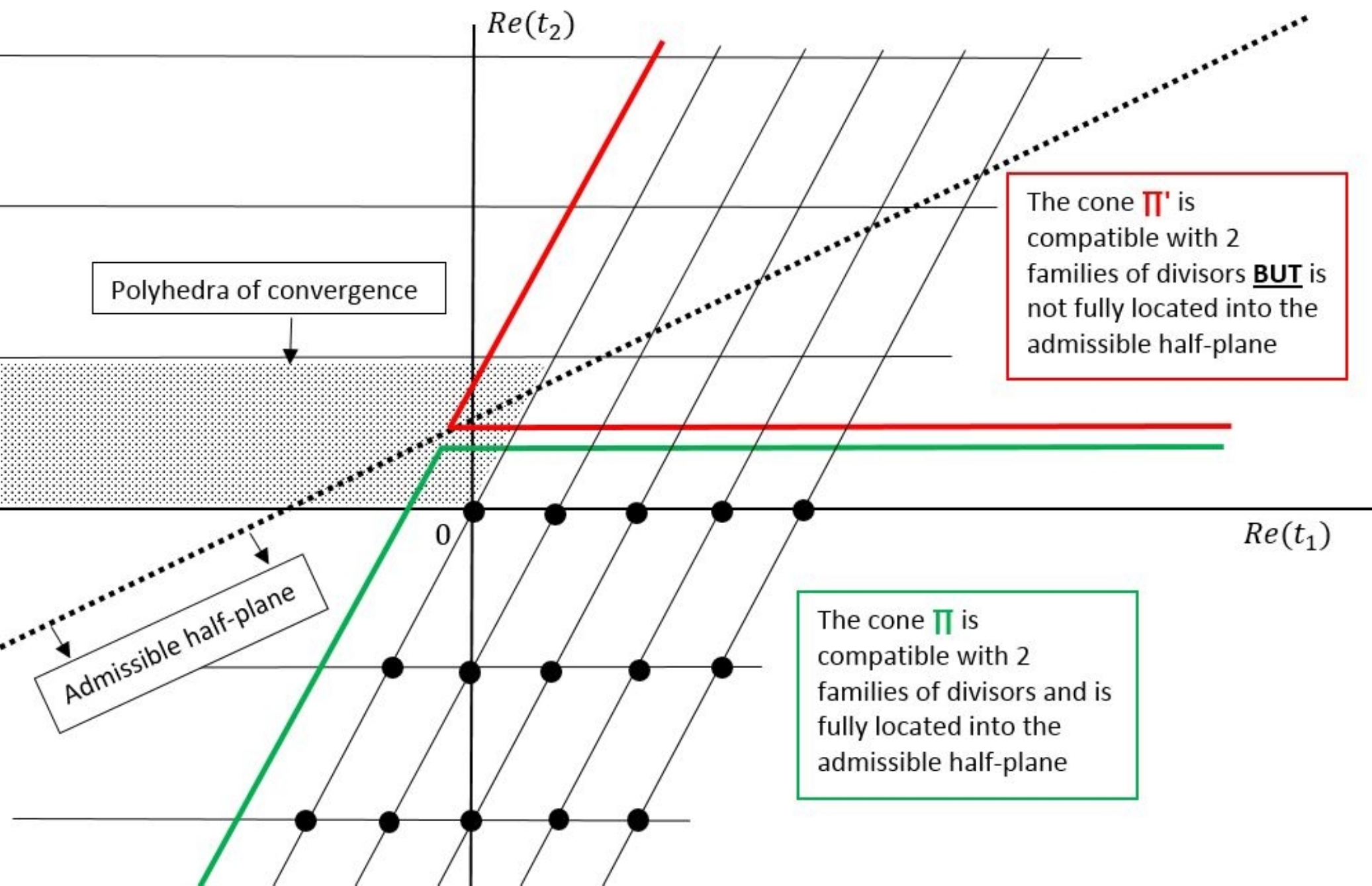}
\caption{The admissible region $\Pi_{\Delta}$, for the complex 2-form $\omega$, is the one located under the dotted oblique line. There is only one compatible cone in this region: the green cone, which is compatible with the two family of divisors $D_1$ (oblique lines) and $D_2$ (horizontal lines). The sum of the residues at every $\mathbb{C}^2$-singularity (points) into this cone is therefore equal to the integral of $\omega$.}
\label{fig2}
\end{figure}
Because $0 < 1-\frac{1}{\alpha} < 1$, the cone $\Pi$ defined by
\begin{equation}
\Pi \, := \,\,\,\, \{\underline{t} \in \mathbb{C}^2 \, , \, Re(t_2) < 0 \, , \, Re(-t_1+t_2) < 0 \}
\end{equation}
is included in $\Pi_\Delta$; moreover it is compatible with the two families of divisors
\begin{equation}
\left\{
\begin{aligned}
& D_1 \, = \, \left\{\underline{t}\in\mathbb{C}^2,  - t_1 + t_2 = -n_1 \,\, , \,\,\, n_1 \in\mathbb{N} \right\}
\\
& D_2 \, = \, \left\{\underline{t}\in\mathbb{C}^2, t_2 = -n_2 \,\, , \,\,\, n_2 \in\mathbb{N} \right\}
\end{aligned}
\right.
\end{equation}
induced by the $\Gamma(-t_1+t_2)$ and $\Gamma(t_2)$ functions respectively. To compute the residues associated to every element of the singular set $D := D_1 \cap D_2$, we change the variables:
\begin{equation}
\left\{
\begin{aligned}
& u_1 \, := \,-t_1 + t_2 \\
& u_2 \, := \, t_2
\end{aligned}
\right.
\longrightarrow
\left\{
\begin{aligned}
& t_1 \, = \, u_2-u_1 \\
& t_2 \, = \, u_2 
\end{aligned}
\right.
\end{equation}
so that in this new configuration $\omega$ reads
\begin{equation}
\omega \, = \, (-1)^{-u_2} \, \frac{\Gamma(u_2)\Gamma(1-u_2)\Gamma(u_1)}{\Gamma(1-\frac{u_2-u_1}{\alpha})} \left(-[\log]-\mu\tau \right)^{-u_1} (-\mu\tau)^{\frac{u_1-u_2}{\alpha}} \, \frac{\id u_1}{2i\pi} \wedge \frac{\id u_2}{2i\pi}
\end{equation}
With this new variables, the divisors $D_1$ and $D_2$ are induced by the $\Gamma(u_1)$ and $\Gamma(u_2)$ functions, and intersect at every point of the type $(u_1,u_2)=(-n,-m)$, $n,m\in\mathbb{N}$. From the singular behavior of the Gamma function \eqref{sing_Gamma} around a singularity, we can write:
\begin{equation}
\omega \, \underset{(u_1,u_2)\rightarrow (-n,-m)} {\sim} \,  \frac{(-1)^{n+m}}{n!m!} (-1)^{-u_2} \, \frac{\Gamma(1-u_2)}{\Gamma(1-\frac{u_2-u_1}{\alpha})} \left(-[\log]-\mu\tau \right)^{-u_1} (-\mu\tau)^{\frac{u_1-u_2}{\alpha}} \, \frac{\id u_1}{2i\pi (u_1+n)} \wedge \frac{\id u_2}{2i\pi (u_2 + m)}
\end{equation}
Taking the residues and simplifying:
\begin{equation}
\res_{(n,m)} \, \omega \, = \,  \frac{(-1)^n}{n!\Gamma(1-\frac{n-m}{\alpha})} \left(-[\log]-\mu\tau \right)^{n} (-\mu\tau)^{\frac{m-n}{\alpha}}
\end{equation}
Summing in the whole cone $\Pi$  yields the series \eqref{Valpha1series}
\end{proof}

\subsection{Closed pricing formula}

\begin{theorem}
Let $S,K,r,\tau,\sigma>0$, let $[\log]:=\log\frac{S}{K}+r\tau$ and $\mu \, := \, \frac{(\sigma/\sqrt{2})^\alpha}{\cos\frac{\pi\alpha}{2}}$. For any $1<\alpha \leq 2$ and under maximal asymmetry hypothesis, the European call price is equal to:
\begin{equation}\label{Formula}
V_{\alpha}(S,K,r,\mu,\tau) \, = \, \frac{Ke^{-r\tau}}{\alpha} \, \sum\limits_{\substack{n = 0 \\ m = 1}}^{\infty} \, \frac{(-1)^n}{n!\Gamma(1-\frac{n-m}{\alpha})} (-[\log]-\mu\tau)^{n}(-\mu\tau)^{\frac{m-n}{\alpha}}
\end{equation}
\end{theorem}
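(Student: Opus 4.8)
The plan is to assemble the final formula directly from the decomposition \eqref{v12}, namely $V_\alpha = V_\alpha^{(1)} - V_\alpha^{(2)}$, combined with the two series representations just established for $V_\alpha^{(1)}$ in \eqref{Valpha1series} and for $V_\alpha^{(2)}$ in \eqref{Valpha2series}. Since both of those series are asserted to be absolutely convergent, I am free to subtract them term by term and to rearrange the order of summation without any convergence obstruction; this removes essentially all the analytic content and reduces the theorem to a bookkeeping identity relating the two index ranges.

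First I would write out the double series for $V_\alpha^{(1)}$ and isolate its $m=0$ slice. Setting $m=0$ in the general summand $\frac{(-1)^n}{n!\,\Gamma(1-\frac{n-m}{\alpha})}(-[\log]-\mu\tau)^n(-\mu\tau)^{\frac{m-n}{\alpha}}$ collapses the Gamma argument to $1-\frac{n}{\alpha}$ and the power of $(-\mu\tau)$ to $-\frac{n}{\alpha}$. The result is precisely the summand of the single series \eqref{Valpha2series} defining $V_\alpha^{(2)}$, and since both carry the identical prefactor $Ke^{-r\tau}/\alpha$, I would conclude that the entire $m=0$ row of the $V_\alpha^{(1)}$ double series equals $V_\alpha^{(2)}$ identically.

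Forming the difference $V_\alpha^{(1)} - V_\alpha^{(2)}$ then cancels exactly the $m=0$ contribution, leaving the double sum over $n\geq 0$ and $m\geq 1$,
\begin{equation}
V_\alpha \, = \, \frac{Ke^{-r\tau}}{\alpha}\sum_{n=0}^{\infty}\sum_{m=1}^{\infty} \frac{(-1)^n}{n!\,\Gamma(1-\frac{n-m}{\alpha})}(-[\log]-\mu\tau)^{n}(-\mu\tau)^{\frac{m-n}{\alpha}},
\end{equation}
which is the claimed formula \eqref{Formula}. The absolute convergence inherited from the two preceding propositions is what legitimizes this term-by-term cancellation, and the constraint $1<\alpha\leq 2$ is already built into those results (it is what makes the characteristic quantity $\Delta = -1+\frac{1}{\alpha}$ negative and permits the right-closure of the contour).

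The one point that genuinely requires care — and which I expect to be the only, albeit modest, obstacle — is aligning the exponents of $(-\mu\tau)$: the $V_\alpha^{(2)}$ series carries $(-\mu\tau)^{-n/\alpha}$ whereas the $V_\alpha^{(1)}$ summand carries $(-\mu\tau)^{(m-n)/\alpha}$. I would therefore verify that the exponent appearing in \eqref{Valpha1series} is indeed $\frac{m-n}{\alpha}$, as produced by the residue computation $\res_{(n,m)}\omega$, and confirm that at $m=0$ it reduces exactly to $-\frac{n}{\alpha}$ so that the cancellation against $V_\alpha^{(2)}$ is term-for-term. Once this exponent matching is checked, the proof closes immediately.
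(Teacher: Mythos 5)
Your proof is correct and takes essentially the same route as the paper: the authors likewise observe that \eqref{Valpha2series} is exactly the $m=0$ slice of the double series \eqref{Valpha1series} and obtain \eqref{Formula} by subtracting via \eqref{v12}. Your insistence on checking the $(-\mu\tau)$ exponent is well placed, since \eqref{Valpha1series} as printed carries a typo ($\frac{n-n}{\alpha}$ should read $\frac{m-n}{\alpha}$), which the residue computation $\res_{(n,m)}\omega$ confirms and which makes the $m=0$ cancellation term-for-term as you describe.
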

\begin{proof}
Let us just recall Eq. \eqref{v12} and use the series obtained in \eqref{Valpha2series} and \eqref{Valpha1series}. Let us remark that the series \eqref{Valpha2series} is indeed a particular case of \eqref{Valpha1series} for $m=0$, and the proof of the pricing formula is complete.
\end{proof}

\noindent
In the case of Black-Scholes, i.e. $\alpha = 2$, the pricing formula \eqref{Formula} can be rewritten as:
\begin{equation}
V_{BS}(S,K,r,\mu,\tau) \, = \, \frac{Ke^{-r\tau}}{2} \, \sum\limits_{\substack{n = 0 \\ m = 1}}^{\infty} \, \frac{(-1)^n}{n!\Gamma(1-\frac{n-m}{2})} (-[\log]-\mu\tau)^{n}(-\mu\tau)^{\frac{m-n}{2}}
\end{equation}
where in this case $\mu = -\frac{\sigma^2}{2}$. We recover the formula obtained in \cite{Aguilar17} for the Black-Scholes call. In particular, if we suppose that the asset is "at-the-money forward", that is:
\begin{equation}
S \, = \, K e ^{-r\tau}
\end{equation}
then by definition $[\log] = 0$ and we are left with:
\begin{equation}\label{Brenner}
V^{ATMF}_{BS}(S,K,r,\sigma,\tau) \, = \, \frac{S}{2} \, \sum\limits_{\substack{n = 0 \\ m = 1}}^{\infty} \, \frac{(-1)^n}{n!\Gamma(1-\frac{n-m}{2})} \, \left( \sigma \frac{\sqrt{\tau}}{\sqrt{2}}  \right) ^{n+m}
\end{equation}
Series \eqref{Brenner} is a power series of the market volatility which starts for $n=0, m=1$:
\begin{equation}
V^{ATMF}_{BS}(S,K,r,\sigma,\tau) \, = \, \, \frac{S}{2} \left[ \, \frac{1}{\Gamma(\frac{3}{2})} \sigma \frac{\sqrt{\tau}}{\sqrt{2}} \, + \, \dots  \, \right]
\end{equation}
Recalling $\Gamma(\frac{3}{2}) = \frac{\sqrt{\pi}}{2}$, we have:
\begin{equation}
V^{ATMF}_{BS}(S,K,r,\sigma,\tau) \, = \, \frac{1}{\sqrt{2\pi}} S \sigma \sqrt{\tau} \dots \, + \, \simeq \, 0.4 S \sigma \sqrt{\tau}
\end{equation}
which is the Brenner-Subrahmanyam approximation \cite{BS94} for the call price. Series \eqref{Brenner} is also derived in appendix directly from Taylor expanding the Black-Scholes formula, which turns out to be easy in the ATM-froward case. The interested reader can compare the two equivalent representations \eqref{Brenner} and \eqref{Brenner_Series_2}.

\section{Numerical applications}



\noindent Let us demonstrate the results obtained in the previous section in the real estimation of option pricing. First, we demonstrate that the double-series \eqref{Formula} converges quickly after very few terms. Tab. \ref{fig:series} provides an example of the series convergence of an option with parameters $S=3800, K=4000, r=1\%, \sigma= 20\%, \tau = 1, \alpha = 1.7$. The convergence is a little bit slower when $\tau$ grows, but remains very efficient, see for instance Table \ref{fig:series5Y}. In Fig \ref{fig7} we compute the partial terms of the series \eqref{Formula} as a series of $m$, that is, the sum of vertical columns in tables \ref{fig:series} and \ref{fig:series5Y}, for a time to maturity $\tau$ being 1 or 5 years. We observe that, in both cases, the convergence is fast.
\begin{table}[h!]
\centering
\begin{scriptsize}
\begin{tabular}{|c||ccccccc|}
  \hline
 {\bfseries $\tau$=1Y}  & 1 & 2 & 3 & 4 & 5 & 6 & 7   \\
  \hline
  \hline
  0 & 395.167 & 49.052 & 4.962 & 0.431 & 0.033 & 0.002 & 0.000     \\
  1 & -190.223 & -32.268 & -4.005 & -0.405 & -0.035 & -0.003 & -0.000   \\
  2 & 23.829 & 7.767 & 1.317 & 0.164 & 0.017 & 0.001 & 0.000    \\
  3 & 1.430 & -0.649 & -0.211 & -0.036 & -0.004 & -0.000 & -0.000   \\
  4 & -0.246 & -0.029 & 0.013 & 0.001 & 0.000 & 0.000 & 0.000   \\
  5 & -0.046 & 0.004 & 0.000 & -0.000 & -0.000 & -0.000 & -0.000  \\
  6 & 0.001 & 0.000 & -0.000 & -0.000 & 0.000 & 0.000 & 0.000   \\
  7 & 0.001 & -0.000 & -0.000 & 0.000 & 0.000 & -0.000 & -0.000   \\
  8 & 0.000 & -0.000 & 0.000 & 0.000 & -0.000 & -0.000 & 0.000   \\
  \hline
  Call & 229.914 & 253.790 & 255.866 & 256.024 & 256.035 & 256.035 & 256.035    \\
  \hline
\end{tabular}
\end{scriptsize}
\caption{Table containing the numerical values for the $(n,m)$-term in the series (\ref{Formula}) for the option price ($S=3800, \, K=4000, \, r=1\%, \sigma=20\%, \, \tau=1Y, \, \alpha=1.7$). The call price converges to a precision of $10^{-3}$ after summing only very few terms of the series.}
\label{fig:series}
\end{table}

\begin{table}[h!]
\centering
\begin{scriptsize}
\begin{tabular}{|c||cccccccccc|}
  \hline
 {\bfseries $\tau$=5Y}  & 1 & 2 & 3 & 4 & 5 & 6 & 7 &8 & 9 & 10  \\
  \hline
  \hline
  0 & 978.516 & 313.038 & 81.607 & 18.274 & 3.626 & 0.651 & 0.108  & 0.016 & 0.002 & 0.000   \\
  1 & -454.606 & -198.750 & -63.582 & -16.576 & -3.712 & -0.736 & -0.132 & -0.022 & -0.003 & -0.000  \\
  2 & 54.963 & 46.168 & 20.184 & 6.457 & 1.683 & 0.377 & 0.075  & 0.013 & 0.002 & 0.000  \\
  3 & 3.183 & -3.721 & -3.1258 & -1.367 & -0.438 & -0.114 & -0.026  & -0.005 & -0.001 & -0.000  \\
  4 & -0.529 & -0.162 & 0.189 & 0.158 & 0.069 & 0.022 & 0.006 & 0.001 & 0.000 & 0.000 \\
  5 & -0.096 & 0.021 & 0.007 & -0.008 & -0.006 & -0.003 & -0.001 & 0.000 & 0.000 & 0.000  \\
  6 & 0.003 & 0.003 & -0.001 & -0.000 & 0.000 & 0.000 & 0.000 & 0.000 & 0.000 & 0.000  \\
  7 & 0.002 & -0.000 & -0.000 & 0.000 & 0.000 & -0.000 & -0.000 & -0.000 & -0.000 & -0.000  \\
  8 & -0.000 & -0.000 & 0.000 & -0.000 & -0.000 & 0.000 & 0.000 & -0.000 & -0.000 & -0.000  \\
  \hline
  Call & 581.436 & 738.034 & 773.313 & 780.252 & 781.475 & 781.672 & 781.702 & 781.706 & 781.706 &  781.706 \\
  \hline
\end{tabular}
\end{scriptsize}
\caption{Same set of parameters as in Table \ref{fig:series}, but for a longer maturity $\tau$= 5 years. The convergence is slightly slower but remains very efficient.}
\label{fig:series5Y}
\end{table}

 \begin{figure}[h]
 \centering
 \includegraphics[scale=0.3]{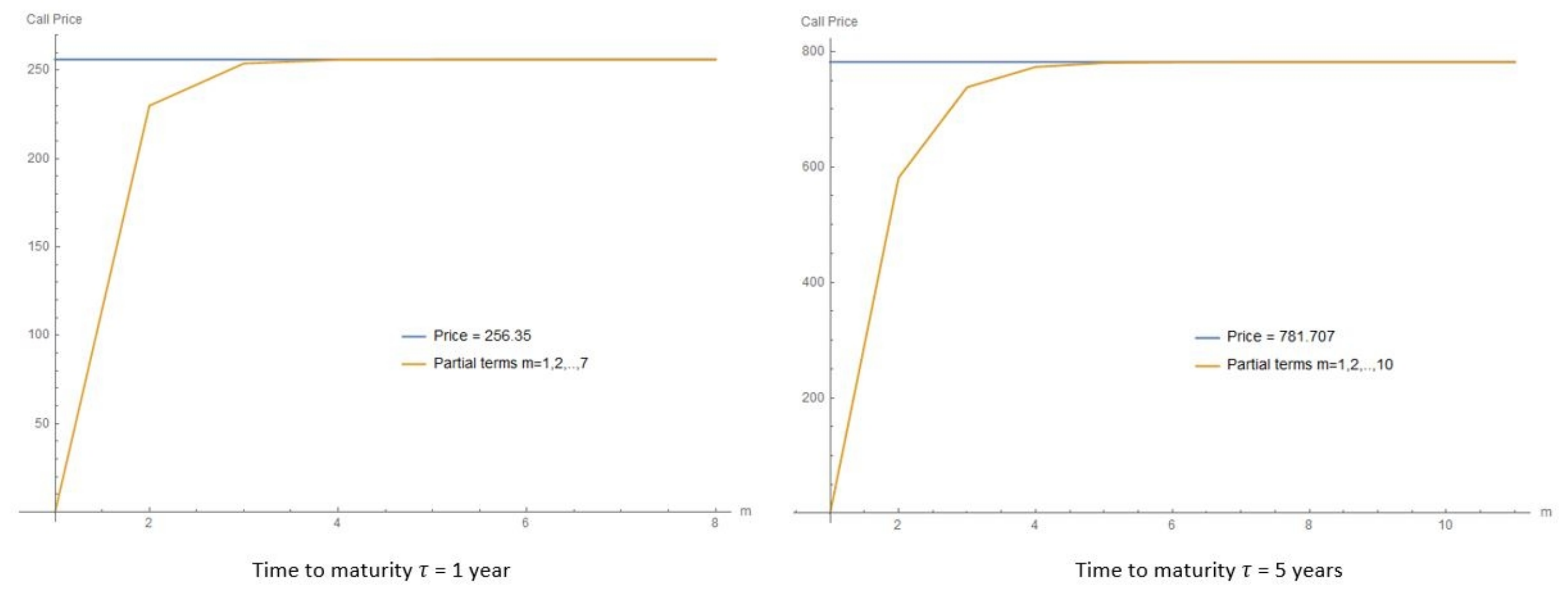}
\caption{Convergence of partial terms of the series \eqref{Formula} to the call price, for short and long times to maturity (1Y and 5Y).}
 \label{fig7}
 \end{figure}


Additionally, we compare the prices obtained for various L\'evy parameter $\alpha$ and market configurations (in and out of the money), via different techniques. Let us assume an option with market parameters $K=4000, r=1\%, \sigma= 20\%, \tau = 1$. We compare in and out of the money option prices via three different techniques:
\begin{itemize}
\item \textbf{Series formula \eqref{Formula}} obtained in this paper. Only the first terms of the double series are taken into account ($n=0\dots 4$, $m=1 \dots 5$), demonstrating its quickness of convergence;
\item \textbf{Gil-Pelaez numerical estimation}. The results obtained by a numerical evaluation of the Gil-Pelaez type: the price of an European call option with spot $S$, strike $K$ and maturity $T$ is:
\begin{align*}
C(S,K,T)&=e^{-rT}\mathbb{E}^\mathbb{Q}[(S-K)^+] \\
&= S_T \Pi_1-e^{-rT}K\Pi_2
\end{align*}
Using Gil-Pelaez (see for instance \cite{Rouah13}) and denoting $k=\log K$ one finds
\begin{equation}
\Pi_1=\frac{1}{2}+\frac{1}{\pi}\int_0^{\infty}\mathcal{R}\left[\frac{e^{iuk}\phi_T(u-i)}{iu}\right] \id u
\end{equation}
\begin{equation}
\Pi_2=\frac{1}{2}+\frac{1}{\pi}\int_0^{\infty}\mathcal{R}\left[\frac{e^{iuk}\phi_T(u)}{iu}\right] \id u
\end{equation}
Finally, rearranging the equation of the call price, we get
\begin{multline}\label{GilPelaez}
C(S,K,T) \,= \, \\
\frac{1}{2}(S - K e^{-rT}) + 
\frac{1}{\pi} \left[ S \, \int_0^{\infty} \mathcal{R}\left[\frac{e^{iuk}\phi_T(u-i)}{iu}\right]\id u -Ke^{-rT}\int_0^{\infty}\mathcal{R}\left[\frac{e^{iuk}\phi_T(u)}{iu}\right] \id u \right]
\end{multline}
The characteristic function in the L\'evy-stable model with maximal assymetry is \cite{Wu03}:
\begin{equation}
\phi_T(u)=\mathbb{E}[e^{iuX_T}]=\exp\left[\mu T(iu-(iu)^\alpha) \right]
\end{equation}
where $\mu=(\frac{\sigma}{\sqrt{2}})^\alpha\sec\left(\frac{\pi\alpha}{2}\right)$; the integrals in \eqref{GilPelaez} can be carried out easily with a code in R or C++.  In the following array, we show that our analytic results are completely confirmed by this evaluation for any $\alpha$.
\item \textbf{Discretization of $g_{\alpha}$} calculated by Mellin-Barnes integral used in \eqref{Levy_Green}. One calculates the probability distribution by the Mellin-Barnes integral for a grid of functions, and replace the integral for option pricing \eqref{Levy_Green} by a discrete sum over this grid. Of course the price is as more precise as the grid is dense. In the following array, we observe that the result of this technique fits very well with the analytic results for $\alpha > 1.6$; this techniques becomes less precise when $\alpha<1.6$: this is because the probability distribution function decays less fast, and thus the grid should be extended to become more accurate.
\end{itemize}

\vspace*{0.5cm}
\noindent
\begin{center}
\begin{tabular}{|c|c|c|c|c|c|c|}
\hline
\textbf{Out of the money (S=3800)}& $\alpha = 1.5$ & $\alpha = 1.6$ & $\alpha = 1.7$ & $\alpha = 1.8$ & $\alpha = 1.9$ & $\alpha = 2$ \\
\hline
\hline
\textbf{Analytic result \eqref{Formula}} & 284.52 & 268.52 &  256.04  & 246.60  & 239.83 & 235.52 \\
\hline
\textbf{Gil-Pelaez} & 284.52& 268.52 & 256.04  & 246.59  & 239.83  & 235.51  \\
\hline
\textbf{Discretization} & 292.74 & 272.17 & 257.63   &  247.45  & 240.20  & 235.74   \\
\hline
\hline
\hline
\textbf{In the money (S=4200)}& $\alpha = 1.5$ & $\alpha = 1.6$ & $\alpha = 1.7$ & $\alpha = 1.8$ & $\alpha = 1.9$ & $\alpha = 2$ \\
\hline
\hline
\textbf{Analytic result \eqref{Formula}} & 547.67 & 523.25 &  502.53  & 485.07  & 470.56  & 458.79 \\
\hline
\textbf{Gil-Pelaez} & 547.67  & 523.25 &  502.53 & 485.07 & 470.56  & 458.79  \\
\hline
\textbf{Discretization} & 557.47 & 527.39 & 504.59 & 486.05 & 470.98 & 459.10  \\
\hline
\end{tabular}
\end{center}

\section{Conclusion and perspectives}
In this paper, we have shown that the formula for an European call option driven by the Finite moment log-stable process can be obtained in the form of rapidly convergent double series (\ref{Formula}).

\noindent
The formula provides a new and efficient analytic tool in the non-Gaussian derivative pricing -- it provides the market practitioner with a simple tool that can be easily used without any deeper knowledge of advanced mathematical techniques.
The series terms are straightforward to evaluate by simple functions - the only "transcendence" being carried by the particular values of the Gamma function.
It can also be easily used to imply the volatility from market observations.

\noindent The convergence of the series is fast enough to ensure excellent levels of precision only with summing the first few terms of the series. Moreover, as there is no need to use a numerical estimation techniques.  

\noindent Moreover, the residue technique can be successfully applied to a much wider class of models. First, one would like to relax the maximal asymmetry hypothesis $\theta = \alpha - 2$, which may not be precise enough for less liquid markets; in this case, the L\'evy propagator \eqref{Levy_Green} is known to diverge, but the Mellin-Barnes integral representation provide us with a natural regularization tool, and the residues summation will remain finite in this case. Second, one can extend the fractional differential equation \eqref{fract_diff} to a \textit{space-time fractional diffusion}. The resulting Green function is obtained as a superposition of stable distributions. In both cases, the residue summation technique will give a finite result. The computation will be detailed in future works.

\section*{Acknowledgements}
\noindent We wish to thank Raphael Douady and Zari Rachev for their interest in this work and their valuable suggestions. J. K. was supported by the Austrian Science Fund, grant No. I 3073-N32 and the Czech Science Foundation, grant No. 17-33812L.

\newpage
\appendix
\section{APPENDIX: Mellin transforms and residues}

We briefly present here some of the concepts used in the paper. The theory of the one-dimensional Mellin transform is explained in full detail in \cite{Flajolet95}. An introduction to multidimensional complex analysis can be found in the classic textbook \cite{Griffiths78}, and applications to the specific case of Mellin-Barnes integrals is developed in \cite{Tsikh94,Tsikh97}.

\subsection{One-dimensional Mellin transforms}

1. The Mellin transform of a locally continuous function $f$ defined on $\mathbb{R}^+$ is the function $f^*$ defined by
\begin{equation}\label{Mellin_def}
f^*(s) \, := \, \int\limits_0^\infty \, f(x) \, x^{s-1} \, \id x
\end{equation}
The region of convergence $\{ \alpha < Re (s) < \beta \}$ into which the integral \eqref{Mellin_def} converges is often called the fundamental strip of the transform, and sometimes denoted $ < \alpha , \beta  > $.
\newline
\noindent 2. The Mellin transform of the exponential function is, by definition, the Euler Gamma function:
\begin{equation}
\Gamma(s) \, = \, \int\limits_0^\infty \, e^{-x} \, x^{s-1} \, \id x
\end{equation}
with strip of convergence $\{ Re(s) > 0 \}$. Outside of this strip, it can be analytically continued, expect at every negative integer $s=-n$ where it admits the singular behavior
\begin{equation}\label{sing_Gamma}
\Gamma(s) \, \underset{s\rightarrow -n}{\sim} \, \frac{(-1)^n}{n!}\frac{1}{s+n} \hspace*{1cm} n\in\mathbb{N}
\end{equation}
\newline
\noindent 3. The inversion of the Mellin transform is performed via an integral along any vertical line in the strip of convergence:
\begin{equation}\label{inversion}
f(x) \, = \, \int\limits_{c-i\infty}^{c+i\infty} \, f^*(s) \, x^{-s} \, \frac{\id s}{2i\pi} \hspace*{1cm} c\in ( \alpha, \beta )
\end{equation}
and notably for the exponential function one gets the so-called \textit{Cahen-Mellin integral}:
\begin{equation}\label{Cahen}
e^{-x} \, = \, \int\limits_{c-i\infty}^{c+i\infty} \, \Gamma(s) \, x^{-s} \, \frac{\id s}{2i\pi} \hspace*{1cm} c>0
\end{equation}
\newline
\noindent 4. When $f^*(s)$ is a ratio of products of Gamma functions of linear arguments:
\begin{equation}
f^*(s) \, = \, \frac{\Gamma(a_1 s + b_1) \dots \Gamma(a_n s + b_n)}{\Gamma(c_1 s + d_1) \dots \Gamma(c_m s + d_m)}
\end{equation}
then one speaks of a \textit{Mellin-Barnes integral}, whose \textit{characteristic quantity} is defined to be
\begin{equation}
\Delta \, = \, \sum\limits_{k=1}^n \, a_k \, - \, \sum\limits_{j=1}^m \, c_j
\end{equation}
$\Delta$ governs the behavior of $f^*(s)$ when $|s|\rightarrow \infty$ and thus the possibility of computing \eqref{inversion} by summing the residues of the analytic continuation of $f^*(s)$ right or left of the convergence strip:
\begin{equation}
\left\{
\begin{aligned}
& \Delta < 0 \hspace*{1cm} f(x) \, = \, -\sum\limits_{Re(s_N) > \beta} \, \res_{S_N} f^*(s)x^{-s}  \\
& \Delta > 0 \hspace*{1cm} f(x) \, = \, \sum\limits_{Re(s_N) < \alpha} \, \res_{S_N} f^*(s)x^{-s}
\end{aligned}
\right.
\end{equation}
For instance, in the case of the Cahen-Mellin integral one has $\Delta = 1$ and therefore:
\begin{equation}
e^{-x} \, = \, \sum\limits_{Re(s_n)<0} \res_{s_n} \Gamma(s) \, x^{-s} \, = \, \sum\limits_{n=0}^{\infty} \, \frac{(-1)^n}{n!}x^n
\end{equation}
as expected from the usual Taylor series of the exponential function.

\subsection{Multidimensional Mellin transforms}

1. Let the $\underline{a}_k$, $\underline{c}_j$, be vectors in $\mathbb{C}^2$,and the $b_k$, $d_j$ be complex numbers. Let $\underline{t}:=\begin{bmatrix} t_1 \\ t_2 \end{bmatrix}$ and $\underline{c}:=\begin{bmatrix} c_1 \\ c_2 \end{bmatrix}$ in $\mathbb{C}^2$ and "." represent the euclidean scalar product. We speak of a Mellin-Barnes integral in $\mathbb{C}^2$ when one deals with an integral of the type
\begin{equation}
\int\limits_{\underline{c}+i\mathbb{R}^2} \, \omega
\end{equation}
where $\omega$ is a complex differential 2-form who reads
\begin{equation}
\omega \, = \, \frac{\Gamma(\underline{a}_1.\underline{t}_1 + b_1) \dots \Gamma(\underline{a}_n.\underline{t}_n + b_n)}{\Gamma(\underline{c}_1.\underline{t}_1 + d_1) \dots \Gamma(\underline{c}_m.\underline{t}_m + b_m)} \, x^{-t_1} \, y^{-t_2} \, \frac{\id t_1}{2i\pi} \wedge \frac{\id t_2}{2i\pi} \hspace*{1cm} \, x,y \in\mathbb{R}
\end{equation}
The singular sets induced by the singularities of the Gamma functions
\begin{equation}
D_k \, := \, \{ \underline{t}\in\mathbb{C}^2 \, , \, \underline{a}_k.\underline{t}_k + b_k = -n_k \, , \, n_k \in\mathbb{N}   \} \,\,\,\, \, k=0 \dots n
\end{equation}
are called the \textit{divisors} of $\omega$. The \textit{characteristic vector} of $\omega$ is defined to be
\begin{equation}
\Delta \, = \, \sum\limits_{k=1}^n \underline{a}_k \, - \, \sum\limits_{j=1}^m \underline{c}_j
\end{equation}
and the \textit{admissible half-plane}:
\begin{equation}
\Pi_\Delta \, := \, \{ \underline{t}\in\mathbb{C}^2 \, , \, Re( \Delta . \underline{t} ) \, < \, Re( \Delta . \underline{c} )\,  \}
\end{equation}
\newline
\noindent 2. Let the $\rho_k$ in $\mathbb{R}$, the $h_k:\mathbb{C}\rightarrow\mathbb{C}$ be linear aplications and $\Pi_k$ be a subset of $\mathbb{C}^2$ of the type
\begin{equation}\label{Pik}
\Pi_k \, := \, \{ \underline{t}\in\mathbb{C}^2, \, Re(h_k(t_k)) \, < \, \rho_k \}
\end{equation}
A \textit{cone} in $\mathbb{C}^2$ is a cartesian product
\begin{equation}
\Pi \, = \, \Pi_1 \times \Pi_2
\end{equation}
where $\Pi_1$ and $\Pi_2$ are of the type \eqref{Pik}. Its \textit{faces} $\varphi_k$ are
\begin{equation}
\varphi_k \, := \, \partial \Pi_k \hspace*{1cm} k=1,2
\end{equation}
and its \textit{distinguished boundary}, or \textit{vertex} is
\begin{equation}
\partial_0 \, \Pi \, := \, \varphi_1 \, \cap \, \varphi_2
\end{equation}
\newline
3. Let $1<n_0<n$. We group the divisors $D=\cup_{k=0}^n \, D_k$ of the complex differential form $\omega$ into two sub-families
\begin{equation}
D_1 \, := \, \cup_{k=1}^{n_0} \, D_k \,\,\, \,\,\, D_2 \, := \, \cup_{k=n_0+1}^{n} \, D_k  \hspace*{1cm}  D \, = \, D_1\cup D_2
\end{equation}
We say that a cone $\Pi\subset\mathbb{C}^2$ is \textit{compatible} with the divisors family $D$ if:
\begin{enumerate}
\item[-] \, Its distinguished boundary is $\underline{c}$;
\item[-] \, Every divisor $D_1$ and $D_2$ intersect at most one of his faces:
\begin{equation}
D_k \, \cap \, \varphi_k \, = \, \emptyset \hspace*{1cm} k=1,2
\end{equation}
\end{enumerate}

\noindent 4. Residue theorem for multidimensional Mellin-Barnes integral \cite{Tsikh94,Tsikh97}: If $\Delta \neq 0$ and if $\Pi\subset\Pi_\Delta$ is a compatible cone located into the admissible half-plane, then
\begin{equation}
\int\limits_{\underline{c}+i\mathbb{R}^2} \, \omega \, = \, \sum\limits_{\underline{t}\in\Pi\cap(D_1 \cap D_2)} \res_{\underline{t}} \, \omega
\end{equation}
and the series converges absolutely. The residues are to be understood as the "natural" generalization of the Cauchy residue, that is:
\begin{equation}
\res_0 \, \left[ f(t_1,t_2) \, \frac{\id t_1}{2i\pi t_1^{n_1}} \wedge \frac{\id t_2}{2i\pi t_1^{n_2}}  \right] \, = \, \frac{1}{(n_1-1)!(n_2-1)!}\frac{\partial ^{n_1+n_2-2}}{\partial t_1^{n_1-1} \partial t_2^{n_2-1} } f(t_1,t_2) |_{t_1=t_2=0}
\end{equation}

\section{Fractional diffusion equation and log-stable processes}
Under the Carr-Wu maximal negative hypothesis $\beta=-1$, or equivalently, $\theta = \alpha - 2$, it is known that the probability distributions of the log-returns (i.e., the Green functions, or fundamental solutions associated to the the option price \eqref{expect}) satisfy the equation (see \cite{Kleinert16,Kleinert13} and references therein):
\begin{equation}\label{fract_diff_carr_wu}
\frac{\partial g_\alpha (x,t)}{\partial t} \, + \, \mu \cdot [ {}^{\alpha-2} D ^{\alpha} \, g_{\alpha} (x,t)] \, = \, 0
\end{equation}
which is a particular case of the generic \textit{space-fractional diffusion}
\begin{equation}\label{fract_diff}
\frac{\partial g_\alpha (x,t)}{\partial t} \, + \, \mu \cdot [{}^{\theta} D ^{\alpha} \, g_{\alpha} (x,t)] \, = \, 0
\end{equation}
Here, the space fractional derivatives is a \textit{Riesz-Feller derivatives}, a two-parameter operator defined by its action on the Fourier space (see \cite{Gorenflo12,Mainardi05} for more details on fractional analysis):
\begin{equation}\label{def_FR}
\widehat{^\theta D ^\alpha f} (k) \, := \, |k|^\alpha e^{i \mathrm{sgn} \, k \frac{\theta \pi}{2}} \, \hat{f}(k)
\end{equation}
It is a consequence of definition \eqref{def_FR} that Green functions associated to space fractional diffusions can be written under the form
\begin{equation}
g_{\alpha,\theta} (x , t ) \, = \, g_{\alpha,\theta} \left( \frac{x}{(-\mu t)^{\frac{1}{\alpha}}} \right)
\end{equation}
These functions have been extensively studied \cite{Mainardi05}; they can conveniently be expressed as Mellin-Barnes integrals. Introducing
\begin{equation}
\rho_{\pm} \ : = \, \frac{\alpha \, \mp \, \theta}{2\alpha}
\end{equation}
then for any $X>0$ the following representation holds:
\begin{equation}\label{Mellin_Green_LS_R+}
g_{\alpha,\theta}(X) \, = \, \frac{1}{\alpha} \, \int\limits_{c_1-i\infty }^{c_1+ i \infty} \, \frac{\Gamma(\frac{t_1}{\alpha})\Gamma(1-t_1)}{\Gamma(\rho_+ t_1)\Gamma(1-\rho_+ t_1)} \, X^{t_1-1} \, \frac{\id t_1}{2i\pi} \hspace*{1cm}  0 < c_1 < \min\{\alpha,1\}
\end{equation}
The Green function \eqref{Mellin_Green_LS_R+} extends to negative arguments via the symmetry property:
\begin{equation}
g_{\alpha,\theta}(-X) \, = \, g_{\alpha,-\theta}(X)
\end{equation}
In the Finite Moment Log-Stable model, where the maximal negative asymmetry hypothesis holds, one notes that
\begin{equation}
\rho_+ \, = \, \frac{1}{\alpha} \,\,\, \mathrm{and} \,\,\, \rho_- \, = \, \frac{\alpha-1}{\alpha}
\end{equation}
and therefore we may denote the corresponding Green function by:
\begin{equation}\label{Green_FLMS_R+}
g_{\alpha}(X) \, = \, \frac{1}{\alpha} \, \int\limits_{c_1-i\infty }^{c_1+ i \infty} \,
\left[
\frac{\Gamma(1-t_1)}{\Gamma(1-\frac{t_1}{\alpha})} \mathrm{H}(X>0)
+
\frac{\Gamma(\frac{t_1}{\alpha})\Gamma(1-t_1)}{\Gamma(\frac{\alpha-1}{\alpha} t_1)\Gamma(1-\frac{\alpha-1}{\alpha} t_1)} \mathrm{H}(X<0)
\right]
\, |X|^{t_1-1} \, \frac{\id t_1}{2i\pi}
\end{equation}
where $H$ is the Heaviside step function. In particular, when $\alpha=2$ (Gaussian, i.e. Black-Scholes case), $\rho_+ = \rho_- \, = \, \frac{1}{2}$ and:
\begin{equation}\label{BS_Green}
g^{BS}(X) \, = \, \frac{1}{2} \, \int\limits_{c_1-i\infty }^{c_1+ i \infty} \, \frac{\Gamma(1-t_1)}{\Gamma(1-\frac{t_1}{2})} \, |X|^{t_1-1} \, \frac{\id t_1}{2i\pi}
\end{equation}
Using the Legendre duplication formula \cite{Abramowitz72} and performing a change of variables $t_1 \rightarrow \frac{1-t_1}{2}$ shows that \eqref{BS_Green} can be written under the form
\begin{equation}
g^{BS}(X) \, = \, \frac{1}{2\sqrt{\pi}} \, \int\limits_{c_1-i\infty }^{c_1+ i \infty} \, \Gamma(t_1) \, \left( \frac{|X|^2}{4} \right)^{-t_1} \, \frac{\id t_1}{2i\pi}
\end{equation}
Using the inversion formula \eqref{Cahen}, we get:
\begin{equation}
g^{BS}(X) \, = \, \frac{1}{2\sqrt{\pi}} e^{-\frac{|X|^2}{4}}
\end{equation}
Recalling that, in the Gaussian case, $\alpha=2$ and $\mu = -\frac{\sigma^2}{2}$, then the Green function in the convolution \eqref{Levy_Green} becomes
\begin{equation}
\frac{1}{(-\mu\tau)^{\frac{1}{\alpha}}} \, g^{BS}\left(\frac{y}{(-\mu\tau)^{\frac{1}{\alpha}}}\right) \, = \, \frac{1}{\sigma\sqrt{2\pi\tau}}\, e^{-\frac{|y|^2}{2\sigma^2\tau}}
\end{equation}
which is the well-known \textit{heat kernel}. It is the fundamental solution of the one-dimensional diffusion (heat) equation
\begin{equation}
\frac{\partial W}{\partial \tau} \, - \, \frac{\sigma^2}{2} \frac{\partial^2 W}{\partial^2 y} \, = \ 0
\end{equation}
which is the particular case of the fractional diffusion \eqref{fract_diff_carr_wu} for $\alpha = 2$, and a reduction of the Black-Scholes PDE after a suitable change of variables \cite{Wilmott06}, whose solution is given by the Black-Scholes formula:
\begin{equation}\label{B-S_formula}
V_{BS}(S,K,r,\sigma,\tau) \, = \, S N\left(\frac{[\log]}{\sigma\sqrt{\tau}}+\frac{1}{2}\sigma\sqrt{\tau} \right) \, - \, K e^{-r\tau} \, N\left(\frac{[\log]}{\sigma\sqrt{\tau}} - \frac{1}{2}\sigma\sqrt{\tau} \right)
\end{equation}
where $[\log]:=\log\frac{S}{K}+r\tau$ and $N(.)$ is the normal distribution function. A particular case occurs when the asset is "at-the-money forward", that is when
\begin{equation}
S\, = \, Ke^{-r\tau}
\end{equation}
because, then, $[\log]=0$ and the Black-Scholes formula \eqref{B-S_formula} reduces to 
\begin{equation}
V^{(ATMF)}_{BS}(S,K,r,\sigma,\tau) \, = \, S \, \left[ N(\frac{1}{2}\sigma\sqrt{\tau}) - N(-\frac{1}{2}\sigma\sqrt{\tau})  \right]
\end{equation}
In this case, it is easy to Taylor expand the normal distribution functions \cite{Abramowitz72}, with the result:
\begin{equation}\label{Brenner_Series_1}
V^{(ATMF)}_{BS}(S,K,r,\sigma,\tau) \, = \, \frac{S}{\sqrt{\pi}} \, \sum\limits_{n=0}^{\infty} \, \frac{ (-1)^n  \left( \sigma\frac{\sqrt{\tau}}{\sqrt{2}} \right)^{2n+1}}{n! 2^{2n} (2n+1)}
\end{equation}
and therefore the price goes as:
\begin{equation}
V^{(ATMF)}_{BS}(S,K,r,\sigma,\tau) \, = \, \frac{1}{\sqrt{2\pi}} S \sigma \sqrt{\tau} \, + \, O\left( (\sigma\sqrt{\tau})^3 \right) \, \simeq \, 0.4 S \sigma\sqrt{\tau}
\end{equation}
which is the approximation obtained by Brenner and Subrahmanyam in \cite{BS94}. Last, one can note that, using the particular values for the Gamma function at half integers \cite{Abramowitz72}, one can rewrite the series \eqref{Brenner_Series_1} as
\begin{equation}\label{Brenner_Series_2}
V^{(ATMF)}_{BS}(S,K,r,\sigma,\tau) \, = \, S \sum\limits_{n=0}^{\infty} \, \frac{ \left( \sigma\frac{\sqrt{\tau}}{\sqrt{2}} \right)^{2n+1}}{(2n+1)! \Gamma(\frac{1}{2}-n)}
\end{equation}

\clearpage

\end{document}